\documentclass[11pt]{article}

\usepackage{sn-preamble} 
\usepackage{tikz}
\usepackage{verbatim}
\usepackage{cancel}

\newcommand{\ignore}[1]{}

\newcommand{\OVB}{\mathsf{Online Vector Balancing}}


\title{A Simple Algorithm for Dynamic Carpooling with Recourse}
\author{
    Yuval Efron \\
    \small Columbia University\\
    \small \texttt{efronyuv@gmail.com}\\
    \and
    Shyamal Patel\thanks{Research supported in part  by NSF grants  CCF-2106429, CCF-2107187, CCF-2218677, ONR grant ONR-13533312, and a NSF Graduate Student Fellowship.}\\
    \small Columbia University\\
    \small \texttt{shyamalpatelb@gmail.com}\\
    \and
    Cliff Stein\thanks{Research supported in part  by NSF grant CCF-2218677, ONR grant ONR-13533312, and by the Wai T. Chang Chair in Industrial Engineering and Operations Research.} \\
    \small Columbia University\\
    \small \texttt{cliff@ieor.columbia.edu}\\
}
\date{}
\begin{document}

\pagenumbering{gobble}
\maketitle  

\pagenumbering{arabic}

\begin{abstract}
    We give an algorithm for the fully-dynamic carpooling problem with recourse: Edges arrive and depart online from a graph $G$ with $n$ nodes according to an adaptive adversary. Our goal is to maintain an orientation $H$ of $G$ that keeps the discrepancy, defined as $\max_{v \in V} |\deg_H^+(v) - \deg_H^-(v)|$, small at all times.
    
    We present a simple algorithm and analysis for this problem with recourse based on cycles that simplifies and improves on a result of Gupta et al. [SODA '22].
\end{abstract}

\section{Introduction}
\paragraph{}{In this paper, we consider the graphical carpooling 
 problem: Given a graph $G = (V,E)$, find an orientation $H$ of $G$ that minimizes the \emph{discrepancy}, which is the absolute difference between the in-degree and out-degree of any vertex i.e. $\max_{v \in V} |\deg^-_H(v) - \deg^+_H(v)|$.}

\paragraph{}{The problem was first studied by Ajtai et al. \cite{AjtaiANRSW98} as a special case of the general carpooling problem for hypergraphs posed by Fagin and Williams in \cite{FaginW83} and is motivated by fairness in scheduling. Indeed, the above problem exactly corresponds to a set of drivers who wish to carpool in pairs where each driver wishes to drive roughly half the time. If all the edges of the graph are known beforehand, a simple scheme of directing cycles yields a solution of discrepancy at most one. That said, the more natural setting is arguably when the edges arrive online. In this setting, edges arrive sequentially, and one must irrevocably decide on the orientation of an edge before observing the next edge in the sequence. In this model, \cite{AjtaiANRSW98} showed that in the presence of an adaptive adversary, any orientation algorithm must have $\Omega(n)$ discrepancy in the worst case.}

\paragraph{}{There are several approaches in the literature for circumventing such a lower bound. One, explored in \cite{AjtaiANRSW98,GuptaK0020}, considers a stochastic version of the problem, namely when edge arrivals are sampled from a known distribution. In such a setting, \cite{GuptaK0020} obtain a solution achieving $\polylog(n,m)$-discrepancy.
In a recent work, \cite{AlweissLS21, liu2021gaussian} improved this bound by showing that one can achieve discrepancy $O(\log(nm))$ in the presence of any oblivious adversary. This was further improved to $O(\sqrt{\log(mn)})$ discrepancy by Kulkarni et al. \cite{kulkarni2024optimal}, although their methods do not give an efficient algorithm for computing such a solution.

In fact, they proved this result for the more general $\OVB$ problem, where vectors $v_1, \dots, v_m \in \mathbb{R}^n$ with $\|v_i\|_2 = O(1)$
arrive online and must be assigned $\pm 1$ signs $\eps_1, \dots, \eps_m$ so as to minimize $\max_t \|\sum_{i \leq t} \eps_i v_i\|_{\infty}$. Note this is more general as the online graphical carpooling problem can be written as a special instance of the vector balancing problem in which the vectors are of the form $e_j - e_k$ for $j,k \in [n]$.}

\paragraph{}{An additional approach, mainly aimed at circumventing the adaptive lower bound, is by introducing recourse, i.e. allowing the player to change the orientation of a limited number of edges each time a new edge arrives.
We refer to the number of edges whose sign has changed as the \emph{recourse}.
With such a relaxation, one can consider an even harder model in which the player is required to maintain good discrepancy not only under edge arrivals (henceforth called \emph{insertion updates}) but also edge removals (henceforth called \emph{deletion updates}). 
This setting is usually dubbed the \emph{Fully-Dynamic} setting.
The work of \cite{gupta2022online} was the first to consider such a setting for both the graphical carpooling problem and the $\OVB$ problem.
For $\OVB$, they show that one can deterministically maintain a near-optimal discrepancy of $O(\sqrt{\log(n)})$ with amortized recourse of $O(n\log m)$ per update.
For the fully-dynamic, graphical carpooling problem, they design a deterministic algorithm achieving $O(\log^7n)$ discrepancy and amortized recourse of $O(\log^5 n)$ per update. 
While their algorithm for $\OVB$ is simple and elegant, their result in the carpooling setting involves clever and sophisticated machinery that maintains an expander decomposition of the graph and orients edges via a local search procedure.
In this work, we present both a simplification of their result for the graphical carpooling problem and an improvement upon their parameters. Namely, our main result is the following Theorem \ref{thm:main}.  }

\begin{theorem}
\label{thm:main}
    There exists a deterministic algorithm for the fully-dynamic, graphical carpooling problem that maintains a solution of discrepancy $3$ and worst case recourse $O(\log^2(n))$.
\end{theorem}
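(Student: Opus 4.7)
The plan is to maintain the invariant $|x_v| := |\deg_H^+(v) - \deg_H^-(v)| \leq 3$ at every time step, repairing any violation by flipping a single short directed path. When an edge $\{u,v\}$ is inserted, I would orient it in whichever direction does not push an endpoint to imbalance $4$; if neither orientation achieves this, so that an endpoint $w \in \{u,v\}$ ends up with $|x_w| = 4$, I search for a directed path in $H$ starting at $w$ and ending at a vertex $w'$ with $x_{w'}$ of the opposite sign, and reverse every edge on this path. A direct calculation shows that such a flip shifts $x_w$ by $-2$ and $x_{w'}$ by $+2$ while leaving the imbalance at every interior vertex of the path unchanged, so the invariant is restored at a recourse cost equal to the path length. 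Deletions are handled symmetrically: removing an oriented edge can push an endpoint from $\pm 3$ to $\pm 4$, which is fixed by flipping a directed path outward from that endpoint.

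\textbf{Existence and discovery of short paths.} For any set $B$ obtained by a breadth-first search along out-edges from $w$, a standard counting argument gives $\sum_{v \in B} x_v = |E(B, \bar B)| - |E(\bar B, B)|$; but by definition of out-reachability no edge leaves $B$, so $\sum_{v \in B} x_v \leq 0$, which contradicts $x_w = 4$ unless $B$ already contains some $v$ with $x_v < 0$. Hence a directed augmenting path always exists; the whole technical content is to produce one of length $O(\log^2 n)$ and find it efficiently. To this end I would overlay a hierarchical structure on $H$ in the spirit of the Holm--Lichtenberg--Thorup dynamic-connectivity data structure: each edge carries a level in $\{0, 1, \dots, O(\log n)\}$, and at every level the algorithm maintains a spanning forest together with an Euler-tour tree supporting path queries in $O(\log n)$. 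An insertion either attaches the new edge to the current level's forest or closes a cycle that we orient consistently; a deletion is resolved, level by level, by flipping at most one tree path of length $O(\log n)$ per level, for a total of $O(\log^2 n)$.

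\textbf{Main obstacle.} I expect the main difficulty to be de-amortizing the recourse bound. An amortized $O(\log^2 n)$ bound falls out naturally from a spanning-forest-plus-cycle-flipping scheme, since each insertion closes a cycle that can be rebalanced and each deletion triggers at most $O(\log n)$ forest swaps whose aggregate path length is bounded by a telescoping potential. Achieving the worst-case bound claimed in Theorem~\ref{thm:main} appears to require charging the work per update against a potential defined on level promotions, in such a way that no single operation ever incurs more than $O(\log^2 n)$ flips. Once the data structure is in place, correctness (discrepancy $3$) follows by induction on the update sequence from the path-flipping calculation above, and the worst-case recourse bound follows from the $O(\log^2 n)$ guarantee on augmenting-path length.
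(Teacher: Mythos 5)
Your local repair step is sound as far as it goes: flipping a directed path from a vertex $w$ with $x_w=+4$ to a reachable vertex $w'$ with $x_{w'}<0$ does shift $x_w$ by $-2$ and $x_{w'}$ by $+2$ while leaving interior vertices untouched, and your counting argument correctly shows such a $w'$ always exists in the out-reachable set. But the entire difficulty of the theorem is the part you defer: bounding the \emph{length} of that path, and nothing in the proposal delivers it. In a general dynamic graph the nearest out-reachable vertex of opposite-sign imbalance can be at distance $\Omega(n)$ (think of a long directed path of balanced vertices), so the flip itself can cost linear recourse, and no potential argument is offered to amortize this, let alone to get a worst-case bound. The appeal to a Holm--Lichtenberg--Thorup-style level hierarchy does not close the gap: that machinery maintains spanning forests for \emph{connectivity} and says nothing about where vertices of negative imbalance sit relative to $w$; the proposed ``flip one tree path of length $O(\log n)$ per level'' has no argument that such flips preserve the invariant $|x_v|\le 3$ at the swapped endpoints, nor that tree paths in the Euler-tour forests have logarithmic length (they need not), nor that the imbalance structure interacts with edge levels at all. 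So the claimed $O(\log^2 n)$ bound on augmenting-path length is asserted, not proved, and this is precisely the missing idea.

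The paper takes a different route that sidesteps the long-path problem structurally. It first handles graphs of girth greater than $2\log n$: there, a breadth-first search along out-edges from a violating vertex must grow exponentially (two vertices at the same level cannot share an out-neighbor without creating a short cycle), so a vertex of out-degree at most $1$ is always within distance $\log n$, giving an out-degree-$2$ orientation with $O(\log n)$ worst-case recourse per insertion and $O(1)$ per deletion; an edge-labelling layer then converts bounded out-degree into discrepancy at most $3$. For general graphs, the algorithm maintains a partition of the edges into such a high-girth subgraph plus a collection of cycles of length at most $2\log n$, each oriented consistently so it contributes zero discrepancy. An insertion that would create a short cycle peels that cycle out of the high-girth part ($O(\log n)$ cheap deletions) and orients it cyclically; a deletion inside a stored cycle dissolves the cycle and re-inserts its $O(\log n)$ remaining edges, each at cost $O(\log n)$, which is where the $O(\log^2 n)$ worst-case recourse comes from. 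If you want to rescue your direct path-flipping scheme, you would need an analogue of this structural guarantee --- some reason short augmenting paths always exist --- which is exactly what the high-girth reduction provides.
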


\paragraph{}{Our main technical tool and major difference to the work of \cite{gupta2022online} is noticing that maintaining a solution for graphs with no short cycles, i.e. a graph of high girth, is simple (cf. Lemma \ref{lem:min-out-deg}). For a general graph, we can then maintain a solution to a high girth subgraph $H$. If the addition of an edge creates a short cycle in $H$, we simple remove all edges in the cycle from $H$ and direct the cycle clockwise. 
For deletions, our algorithm for high girth graphs can handle the removal of any edge from $H$ (cf. Lemma \ref{lem:min-out-deg}). On the other hand, if we delete an edge $e$ from a short cycle $C$, then we remove every edge from the cycle $C$ and subsequently simulate inserting every edge in $C \setminus e$ to update our orientation.
Conceptually, we note that when the edges are viewed as vectors vis-{\`a}-vis the correspondence described in the reduction to the $\OVB$ problem short cycles correspond exactly to linear dependencies, which also form the backbone for the algorithm of \cite{gupta2022online} for the $\OVB$ problem that achieves $O(n \log(m))$ recourse.}

We note that while our algorithms are efficient, we do not pay close attention to their run times. That said, a closely related problem for which dynamic graph algorithms are well-studied is to minimize the maximum out-degree, rather than the discrepancy. Various algorithms achieve tradeoffs between update time,  maximum out-degree, and number of edge flips, typically in (pseudo) forests. 
Brodal and Fagerberg~\cite{BF99} showed that maintaining an orientation with bounded out-degree takes {\em amortized } $O({\log n})$ insertion time and {\em worst-case} $O(1)$ deletion time. 
Several works have given improved results and different tradeoffs, see, e.g. 
\cite{HTZ14,KKPS14,BB17,BenderKKP021} for more details.

\section{Preliminaries}\label{sec:prleim}
\textbf{Notation.} We will let $G = (V,E)$ denote an undirected graph with multi-edges. $H = (V,E)$ will denote a directed graph with multi-edges. An edge directed from $u$ to $v$ is denoted $uv$. For a $v \in V$, We let $\deg_G(v)$ denote the degree of $v$ in $G$, $\deg_H^-(v)$ denote the in-degree of $v$ in $H$, and $\deg_H^+(v)$ denote the out-degree of $v$ in $H$. Given a graph $G = (V,E)$ and a subset of edges $F \subseteq E$, we let $G[F]$ denote the graph induced by $F$. Given vertices $u,v \in V$, $u \rightsquigarrow v$ denotes the shortest path from $u$ to $v$. Finally, given a graph $G = (V,E)$ and an edge $e \in V \times V$ we let $G \setminus e$ and $G \cup e$ denote the graphs $(V, E \setminus \{e\})$ and $(V, E \cup \{e\})$, respectively.

With this, we now formally define the Fully-Dynamic, graphical carpooling problem.

\begin{definition}[Fully-Dynamic, Graphical Carpooling]\label{def:FDCP}
Starting from an empty graph $G_0=(V,E_0)$ with $E_0=\emptyset$, an adaptive adversary at each time $t$ chooses an update edge $e_t\in V\times V$ to add or remove, yielding the graph $G_t = G_{t-1} \cup e$ or $G_t = G_{t-1} \setminus e$ respectively. We are tasked with maintaining an orientation of $G_t$, $H_t$, such that $\max_t \max_{v \in V} |\deg_{H_t}^+(v) - \deg_{H_t}^-(v)|$ is minimized. The algorithm is allowed to change the orientation of edges after every update, and we refer to the total number edge orientation changes made by the algorithm as the recourse.
\end{definition}

We remark that one often assumes a bound on the number of updates performed by the adversary, but our algorithms and their guarantees will not depend on the number of updates performed by the adversary.

We also centrally use the girth of a graph.

\begin{definition}[Girth]\label{def:Girth}
    The girth of a graph $G = (V,E)$ is the length of shortest cycle in $G$.
\end{definition}

\section{Proof of Theorem \ref{thm:main}}\label{sec:proof_of_main}
\subsection{Handling High Girth Graphs}\label{ssec:high_girth_graphs}
In this section, we describe an algorithm for orienting edges in a graph that is always promised to have high girth. Formally, we'll show the following.

\begin{lemma}
\label{lem:sparse-case}
    Suppose edges arrive and depart online in such a way that the underlying graph $G = (V,E)$ always has girth strictly bigger than $2\log(n)$. Then there exists a deterministic algorithm that maintains a solution to the fully-dynamic graphical carpooling problem with discrepancy at most $3$ and worst case recourse $O(\log(n))$ per addition and $O(1)$ recourse per removal.
\end{lemma}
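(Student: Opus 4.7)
We maintain an orientation $H$ of $G$ with $|x(v)| \leq 3$ for all $v \in V$ at all times, where $x(v) := \deg_H^+(v) - \deg_H^-(v)$. The key structural observation is that the girth hypothesis makes $H$ locally tree-like: BFS from any vertex in $H$ up to depth $d := \lfloor \log n \rfloor - 1$ traces out an actual subtree of $G$, since any non-tree edge between two vertices in the BFS ball would close an undirected cycle of length at most $2d + 1 \leq 2\log n - 1$, contradicting that $G$ has girth strictly greater than $2\log n$.

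For insertions of an edge $e = (u, v)$, I would orient $e$ from the lower-$x$ endpoint to the higher-$x$ endpoint, which reduces $|x(u) - x(v)|$ by $2$ and preserves the invariant in every case except when $x(u) = x(v) = \pm 3$. In this bad case (take the $+3$ subcase), I first apply a \emph{directed path flip}: perform BFS in $H$ from $u$ along out-edges to find a directed path $u = w_0 \to w_1 \to \cdots \to w_k = w$ with $x(w) \leq 1$, and reverse every edge on this path. The flip changes $x$ only at the endpoints $u$ and $w$, because each internal vertex has exactly one in-edge and one out-edge on the path, which swap roles upon reversal and cancel. The endpoints shift as $x(u) \to x(u) - 2 = 1$ and $x(w) \to x(w) + 2 \leq 3$, so the invariant is preserved. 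Afterwards, orienting $e$ as $u \to v$ gives $x(u) = 2$ and $x(v) = 2$. The symmetric subcase $x(u) = x(v) = -3$ is handled by BFS along in-edges.

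The heart of the argument is showing BFS finds such a $w$ within depth $d = O(\log n)$. Any visited vertex $w$ with $x(w) \geq 2$ has $\deg_H^+(w) = (\deg_G(w) + x(w))/2 \geq 2$, and each of these out-edges must lead to a previously unvisited vertex (otherwise a short cycle closes, contradicting the girth). Hence if every vertex in the BFS tree had $x \geq 2$, the tree would branch at rate at least $2$ and contain at least $2^{d+1} - 1 \geq n - 1$ vertices, each contributing $\geq 2$ to $\sum_v x(v)$. Combined with $|x| \leq 3$ on the at most one remaining vertex, this contradicts the handshake identity $\sum_v x(v) = 0$ for all sufficiently large $n$. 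So some $w$ with $x(w) \leq 1$ must appear within BFS depth $d$, yielding the desired short path of length $O(\log n)$ whose reversal gives $O(\log n)$ recourse.

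For deletions, I would simply remove the edge, accompanied by at most a constant number of local reorientations near the two endpoints to preserve an auxiliary structural invariant guaranteeing that removing any single edge keeps $|x(v)| \leq 3$ (for example, that no vertex with $x(v) = 3$ has an in-edge, and symmetrically for $x(v) = -3$). The hard part will be the careful design of this auxiliary invariant: it must be (i) restorable by $O(1)$ recourse after any deletion, (ii) restored at the end of the BFS-based insertion procedure, and (iii) strong enough that the BFS reliably finds a target $w$ whose flip respects it. Coordinating these three constraints is the main technical challenge. Once the invariants are in place, the recourse bounds follow directly: insertion uses $O(\log n)$ (the flipped path plus the new edge) and deletion uses $O(1)$ (the removal plus a constant number of local flips).
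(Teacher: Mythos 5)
Your insertion analysis is essentially sound (and in fact mirrors the paper's key girth-based branching argument: vertices of surplus at least $2$ have out-degree at least $2$, the high girth forces the directed BFS ball to be a tree, so a low-surplus vertex is found within depth $O(\log n)$ and a path reversal fixes things with $O(\log n)$ recourse). The genuine gap is the deletion case, which is exactly where the difficulty of the lemma lies: maintaining $|x(v)|\leq 3$ directly on $H$ means a deletion of an in-edge at a vertex with $x(v)=3$ breaks the invariant, and your plan defers this to an unspecified auxiliary invariant (e.g.\ ``no vertex with $x(v)=3$ has an in-edge'') that you explicitly leave undesigned. Worse, the candidate invariant you name is in tension with your own insertion step: the BFS path from $u$ only stops at a vertex with $x\leq 1$, so internal path vertices may have $x=3$, and reversing the path gives each such vertex a new in-edge; likewise the terminal vertex $w$ may be pushed to $x(w)=3$ while retaining other in-edges. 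Restoring the invariant at these vertices is not obviously an $O(1)$ (or even $O(\log n)$) local operation --- a repair flip at one vertex changes $x$ by $2$ at a neighbor and can cascade --- so conditions (i)--(iii) that you list are not just bookkeeping but the actual open content of the proof.

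The paper sidesteps this by decoupling the two tasks. It first maintains an auxiliary ``edge labelling'' $L$ (equivalently an orientation) in which every vertex has at most $2$ edges not labelled with it; this is Lemma \ref{lem:min-out-deg}, proved with the same BFS/branching argument you use, and crucially deletions there cost nothing (deleting an edge cannot increase any out-degree, so the algorithm does nothing). The real orientation $H$ then only balances, at each vertex $v$, the edges labelled $v$ (Equation (\ref{invariant})), so each edge affects the balance of a single vertex; the discrepancy bound $1+2=3$ follows, and the recourse in $H$ is charged to label changes, giving $O(\log n)$ per insertion and $O(1)$ per deletion. Some version of this ownership/layering idea (or another mechanism making deletions cheap) is what your plan is missing.
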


The main tool in the proof will be the following  lemma.

\begin{lemma}
\label{lem:min-out-deg}
    Suppose that edges arrive and depart online in such a way that the underlying graph $G = (V,E)$ has girth strictly bigger than $2 \log(n)$. Then there exists a deterministic algorithm that orients every edge such that $v \in V$ has out degree at most $2$ and has worse case recourse $O(\log(n))$ per addition and $O(1)$ recourse per removal.
\end{lemma}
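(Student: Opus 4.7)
The plan is to maintain an orientation $H$ of $G$ with maximum out-degree at most $2$ at all times, using shortest augmenting paths to absorb insertions. When an edge $uv$ arrives, if $\deg_H^+(u) \leq 1$ we orient it as $u \to v$, and symmetrically for $v$ if $\deg_H^+(v) \leq 1$, at the cost of a single orientation. Otherwise both endpoints already have out-degree $2$, and I plan to run a directed BFS from $u$ along the out-edges of $H$ until we discover the closest vertex $w$ with $\deg_H^+(w) \leq 1$; we then reverse every edge along the found $u$-to-$w$ path and orient the new edge as $u \to v$. Deletions cost zero recourse, since removing an edge only decreases out-degrees and cannot violate the invariant.

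Correctness of the out-degree-$\leq 2$ invariant is immediate from the standard fact that reversing a directed path changes only the out-degrees of its two endpoints: $\deg_H^+(u)$ drops by $1$ and $\deg_H^+(w)$ rises by $1$, while every internal vertex's out-degree is unchanged. Hence after the reversal $u$ has out-degree $1$, which returns to $2$ when $u \to v$ is added, and $w$ moves from at most $1$ to at most $2$; all other out-degrees are untouched.

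The heart of the proof is bounding the augmenting-path length by $\log n$, which is where the girth hypothesis enters. Consider the directed BFS tree from $u$ truncated at depth $\log n$. I plan to show this is a genuine tree: if two distinct directed paths from $u$ of length at most $\log n$ met at a common vertex, their symmetric difference would contain a closed walk of length at most $2\log n$ in the underlying undirected graph, and hence a simple cycle of length at most $2\log n$, contradicting the assumption that the girth strictly exceeds $2\log n$. Now if every vertex in this truncated tree had out-degree exactly $2$, the tree would be a complete binary tree of depth $\log n$ with $2^{\log n + 1} - 1 = 2n - 1$ distinct vertices, impossible since $|V| = n$. Thus some vertex of out-degree at most $1$ lies at depth at most $\log n$ from $u$, so the BFS finds such a $w$ and the total recourse of the insertion is at most $\log n + 1 = O(\log n)$.

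The main obstacle I anticipate is the slightly fiddly step that a closed walk of length strictly less than the girth indeed forces a simple cycle of length strictly less than the girth (requiring care about possible backtracking between the two BFS paths); once that is settled, the counting of tree vertices and the invariant verification are routine.
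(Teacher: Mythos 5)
Your proposal is correct and follows essentially the same route as the paper: maintain out-degree at most $2$, handle a deletion with no flips, and handle an insertion by reversing a shortest directed path from the saturated endpoint to a vertex of out-degree at most $1$, using the girth hypothesis to rule out short cycles and force exponential growth of the out-neighborhood so that such a vertex exists within distance $\log n$. Your counting via the size of the truncated BFS tree ($2n-1$ nodes) is just a repackaging of the paper's level-doubling claim, so the two arguments are the same in substance.
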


We remark that \cite{gupta2014maintaining} proves this result for trees. Since graphs of high girth are ``locally tree-like,'' we can essentially use the same proof.

\begin{proof}
Let $G$ denote the underlying graph and $H$ the oriented graph. We now use the following simple procedure to achieve the guarantee: When an edge is deleted from $G$, do nothing. If an edge is added to $G$, orient it arbitrarily in $H$, say from $u$ to $v$. We then find the shortest path in $H$ from $u$ to a vertex $w$ with out-degree at most $1$ and flip every edge along the path, where we possibly have that $w$ is equal to $u$. It turns out that such a path will always exist, as we will argue below.

Clearly, this procedure has recourse at most $O(1)$ per removal. Additionally, any removal clearly does not increase the maximum out degree. On the other hand, it's simple to check that flipping edges along the path will also keep the maximum out-degree below $2$ so long as such a path exists. 

Thus, it remains to prove that a path to vertex of degree at most $1$ always exists and is short, as the length of the path will bound the recourse of an insertion. Towards a contradiction, suppose that there does not exist such a path of length at most $\log(n)$. We now show the following claim:

\begin{claim}
    $|\{w: \dist_H(u,w) = \ell\}| \geq 2^\ell$ for all $\ell \leq \log(n)$.
\end{claim}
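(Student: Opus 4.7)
The plan is to induct on $\ell$, with trivial base case $L_0 = \{u\}$. For the inductive step, I assume $|L_\ell| \geq 2^\ell$ for some $\ell < \log n$ and aim to show $|L_{\ell+1}| \geq 2^{\ell+1}$, where I write $L_\ell = \{w : \dist_H(u,w) = \ell\}$. The starting observation is that every $v \in L_\ell$ is within directed distance $\log n$ of $u$, so by the contradictory hypothesis of the surrounding proof we have $\deg_H^+(v) \geq 2$. Thus the multi-set of out-edges leaving $L_\ell$ has size at least $2|L_\ell| \geq 2^{\ell+1}$, and it suffices to show these out-edges land on $2^{\ell+1}$ distinct vertices of $L_{\ell+1}$. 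Concretely, I need three facts: (a) every out-neighbor of a vertex in $L_\ell$ lies in $L_{\ell+1}$; (b) distinct $v_1, v_2 \in L_\ell$ have disjoint out-neighborhoods; and (c) no vertex has parallel out-edges.

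All three facts should reduce to the girth hypothesis through a single principle: two distinct simple paths in $G$ between the same pair of endpoints, each of length $\leq \log n$, would force a cycle of length $\leq 2 \log n$ in $G$ via a symmetric-difference argument, violating the girth bound $> 2\log n$. For (b), I would take shortest directed paths $u \rightsquigarrow v_i$ of length $\ell$ and append the edges $v_i \to w$ to get two simple paths from $u$ to $w$ of length $\ell+1$ in $G$; they must differ since they end in distinct edges $(v_i, w)$ because $v_1 \neq v_2$. For (a), if $v \in L_\ell$ has an out-edge to $w \in L_k$ with $k \leq \ell$, then a shortest directed path $u \rightsquigarrow v$ appended with $v \to w$ and a shortest directed path $u \rightsquigarrow w$ give the two required distinct paths, of lengths $\ell + 1$ and $k$ respectively, with total $k + (\ell + 1) \leq 2\ell + 1 < 2\log n$. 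Fact (c) is degenerate: parallel edges $v \to w$ form a cycle of length $2$ in $G$, excluded since $2 < 2\log n$.

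Combining (a), (b), and (c), the $\geq 2|L_\ell|$ out-edges from $L_\ell$ land on distinct vertices of $L_{\ell+1}$, giving $|L_{\ell+1}| \geq 2|L_\ell| \geq 2^{\ell+1}$ and closing the induction. The main subtlety I anticipate is in (a): if $w$ happens to lie on the shortest path $u \rightsquigarrow v$ being appended, then the concatenated walk is not a simple path, and the symmetric-difference argument must be replaced by directly extracting a short directed cycle $w \to \cdots \to v \to w$ in $H$ of length at most $\ell - k + 1 \leq \log n$. This case needs to be treated separately but still produces a cycle well within the girth bound, so the argument goes through uniformly.
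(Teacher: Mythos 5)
Your proof is correct and follows essentially the same route as the paper: induction on the BFS levels of $H$ from $u$, with the girth bound forcing the out-neighbors of level-$\ell$ vertices to be distinct and to lie in level $\ell+1$, and the standing contradiction hypothesis supplying out-degree at least $2$ for every vertex within distance $\log n$ of $u$. Your facts (a)--(c) simply spell out, a bit more carefully (including the parallel-edge case and the case where $w$ lies on the path $u \rightsquigarrow v$), what the paper's shared-neighbor argument handles implicitly.
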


\begin{proof}
We prove the statement by induction. The base case is trivial since by assumption $u$ has out-degree at least $2$. Now assume the statement holds for some $\ell \leq \log(n) -1$. Let $x,y \in \{w: \dist_H(u,w) \leq \ell\}$. Note that if $x$ and $y$ share a neighbor $z$ then $u \rightsquigarrow x \rightsquigarrow z \rightsquigarrow y \rightsquigarrow u$ contains a cycle of length $\leq 2 \log(n)$ in $G$, a contradiction. Thus, all vertices in $\{w: \dist_H(u,w) \leq \ell\}$ must have distinct out neighbors. Moreover, since the shortest path to a vertex with out-degree at most $1$ has length at least $\log(n) + 1$, every vertex in $|\{w: \dist_H(u,w) = \ell\}|$ must have out-degree at least $2$. Thus,
    \[|\{w: \dist_H(u,w) = \ell + 1\}| \geq 2|\{w: \dist_H(u,w) = \ell\}| \geq 2^{\ell + 1}\]
This completes the proof of the claim.     
\end{proof}

Applying the claim with $\ell = \log(n)$ yields $n$ vertices at distance $\log(n)$, but since $\dist_H(u,u) = 0$ there can be at most $n-1$ such vertices. So we have arrived at the desired contradiction.
\end{proof}

In order to avoid having multiple orientations floating around, we now make the the following definition:

\begin{definition}[Edge Labelling]\label{def:edge_labeling}
Given a graph $G = (V,E)$, we say an edge labelling $L: E \rightarrow V$ is a function that maps each edge $uv \in E$ to one its endpoints $u$ or $v$.
\end{definition}

Note that edge labelling are exactly equivalent to edge orientations. With this, we now turn to the proof of Lemma \ref{lem:sparse-case}. The high level idea will be to maintain an edge labelling $L$ over the graph such that every vertex $v \in V$ has at most two incident edges not labelled with $v$. We will then only be interested in balancing the edges incident to $v$ that are labelled according to $L$. Note that this task is relatively easy as each edge now only contributes to the imbalance of one vertex instead of two.

\begin{proof}[Proof of Lemma \ref{lem:sparse-case}]
    Let $G$ denote the underlying graph and $H$ denote the orientation that we will maintain. The key idea will be to maintain two invariants. First, we keep an edge labelling $L$ of $G$ such that every $v \in V$ has at most $2$ incident edges labelled according to some vertex in $V \setminus \{v\}$. Next, to orient edges, we enforce that for any vertex $v \in V$, we always have
        \begin{equation}
        \tag{$\star$}
        \label{invariant}
            \bigg | \big|\{vu :L(vu) = v \land vu \in E(H)\} \big| - \big |\{uv:L(uv) = v \land uv \in E(H)\} \big|\bigg | \leq 1.
        \end{equation} 
    Intuitively, we are balancing the edges incident to $v$ that $L$ labels with $v$. As there are at most two other edges incident to $v$ and labelled with a different vertex, the imbalance of edges with label $v$ will differ by at most $2$ from the discrepancy at $v$. Thus, maintaining both invariants would lead to an orientation of discrepancy of at most $3$, as desired.

    It remains to show that these invariants can be maintained. When an edge arrives or departs, we update the labelling $L$ according to Lemma \ref{lem:min-out-deg}. If an edge $uv$'s label changed (or that edge departed or arrived), then we change the minimum number of edges to fix Equation (\ref{invariant}) at $u$ and $v$. More formally, note that we can always fix the invariant as each edge contributes to Equation (\ref{invariant}) for a unique vertex in $V$. If after updating the labelling to $L'$ we have that Equation (\ref{invariant}) at vertex $u$ is equal to $k$, then we flip the orientation of $\lfloor \frac{k}{2} \rfloor$ edges in such a way to decrease the imbalance to have value at most $1$.

    To bound the recourse, consider Equation (\ref{invariant}) under the updated labelling $L'$ at $v$ before changing the orientation of any of the edges, we then have that 
        \begin{align*}
        \bigg | \big|\{vu &:L'(vu) = v \land vu \in E(H)\} \big| - \big |\{uv:L'(uv) = v \land uv \in E(H)\} \big|\bigg | \\
        &\quad \quad \quad - \bigg | \big|\{vu :L(vu) = v \land vu \in E(H)\} \big| - \big |\{uv:L(uv) = v \land uv \in E(H)\} \big|\bigg | \\
         &\leq \bigg | \big|\{vu :L'(vu) = v \land vu \in E(H)\} \big| - \big|\{vu :L(vu) = v \land vu \in E(H)\} \big| \\
         & \quad \quad \quad + \big|\{uv :L(uv) = v \land uv \in E(H)\} \big| - \big |\{uv:L'(uv) = v \land uv \in E(H)\} \big|\bigg |  \\
         &\leq \bigg | \big|\{vu :L'(vu) = v \land vu \in E(H)\} \big| - \big|\{vu :L(vu) = v \land vu \in E(H)\} \big| \bigg| \\
         & \quad \quad \quad + \bigg| \big|\{uv :L(uv) = v \land uv \in E(H)\} \big| - \big |\{uv:L'(uv) = v \land uv \in E(H)\} \big|\bigg | \\
         &\leq |\{e \in E(H) : L'(e) \not = v \land L(e) = v \}| + |\{e \in E(H) : L'(e) = v \land L(e) \not = v \}|
        \end{align*}
    Thus, to maintain invariant Equation (\ref{invariant}) at $v$, we need to make at most 
        \begin{align*}
        \left \lfloor \frac{ |\{e \in E(H) : L'(e) \not = v \land L(e) = v \}| + |\{e \in E(H) : L'(e) = v \land L(e) \not = v \}| + 1}{2} \right \rfloor \\
        \leq |\{e \in E(H) : L'(e) \not = v \land L(e) = v \}| + |\{e \in E(H) : L'(e) = v \land L(e) \not = v \}|
        \end{align*}
    changes. Summing over all vertices $v \in V$ then yields that the total number of changes is bounded by at most twice the number of changes made to the labeling $L$. Using Lemma \ref{lem:min-out-deg}, we then get that we make $O(\log(n))$ changes per addition and $O(1)$ changes per removal.
\end{proof}

\subsection{Carpooling for General Graphs}
We now leverage our algorithm for high girth graphs to get a solution for the case of arbitrary graphs. The proof will closely follow the sketch presented in the introduction.

\begin{algorithm}[H]
\addtolength\linewidth{-2em}

\vspace{0.5em}

\textbf{Input:} A graph $G = (V,E)$, an orientation $H$ of $G$, a set $E_{girth} \subseteq E(G)$ such that $G[E_{girth}]$ has girth at least $2 \log(n) + 1$, a set $\calC \subseteq 2^E$ such that each element of $\calC$ is a cycle of length at most $2 \log(n)$ and $E_{girth} \sqcup \calC$ partitions $E(G)$, and an edge $e \in V \times V$ to add

\textbf{Output:} An updated graph $G$, orientation $H$, and partition $E_{girth} \sqcup \calC$
\\[0.25em]

\textsc{Add Edge}($G, H, E_{girth}, \calC,e$):
\begin{enumerate}
    \item If $G[E_{girth} \cup \{e\}]$ has girth at least $2 \log(n) + 1$:
        \begin{enumerate}
            \item Update $H$ by adding $e$ to $H[E_{girth}]$ via Lemma \ref{lem:sparse-case}
            \item Return $G \cup e, H, E_{girth} \cup e, \calC$
        \end{enumerate}
    \item Let $C$ denote any cycle of length at most $2 \log(n)$ containing $e$ in $E_{girth} \cup \{e\}$
    \item For each edge $f \in C \setminus e$, remove $f$ from $H[E_{girth}]$ via Lemma \ref{lem:sparse-case}.
    \item Add $C$ to $H$ and orient the edges in $C$ clockwise.
    \item Return $G \cup e, H, E_{girth} \setminus C, \calC \cup C$
\end{enumerate}
\caption{Update Edge Arrivals}
\label{alg:arrivals}
\end{algorithm}

\begin{algorithm}[H]
\addtolength\linewidth{-2em}

\vspace{0.5em}

\textbf{Input:} A graph $G = (V,E)$, an orientation $H$ of $G$, a set $E_{girth} \subseteq E(G)$ such that $G[E_{girth}]$ has girth at least $2 \log(n) + 1$, a set $\calC \subseteq 2^E$ such that each element of $\calC$ is an cycle of length at most $2 \log(n)$ and $E_{girth} \sqcup \calC$ partitions $E(G)$, and an edge $e \in E$ to remove

\textbf{Output:} An updated graph $G$, orientation $H$, and partition $E_{girth} \sqcup \calC$
\\[0.25em]

\textsc{Remove Edge}($G, H, E_{girth}, \calC,e$):
\begin{enumerate}
    \item If $e \in E_{girth}$:
        \begin{enumerate}
            \item Update $H$ by removing $e$ from $H[E_{girth}]$ via Lemma \ref{lem:sparse-case}
            \item Return $G \setminus e, H, E_{girth} \setminus e, \calC$
        \end{enumerate}
    \item Let $C \in \calC$ denote the cycle containing $e$ 
    \item Remove $C$ from $\calC$, $G$ and $H$
    \item For each $f \in C \setminus \{e\}$: 
    \begin{enumerate}
        \item Update $(G, H, E_{girth}, \calC)$ by running \textsc{Add Edge}$(G, H, E_{girth}, \calC, f)$
    \end{enumerate}
    \item Return $G, H, E_{girth}, \calC$
\end{enumerate}
\caption{Update for Edge Departures}
\label{alg:departures}
\end{algorithm}

\begin{proof}[Proof of Theorem \ref{thm:main}]
    As before we let $G$ be the underlying graph and $H$ the oriented graph. Our algorithm will will partition the edges into a high girth graph and a set of $O(\log(n))$ length cycles. Initially, there are no cycles and the high girth graph is empty. To add edges, we then use Algorithm \ref{alg:arrivals} to update our state. To remove edges, we use Algorithm \ref{alg:departures}.
    
    We now bound the recourse of our procedures. Note that if we add an edge and it doesn't create a short cycle then there is $O(\log(n))$ recourse by Lemma \ref{lem:sparse-case}. If we add an edge and it creates a short cycle, then we remove $O(\log(n))$ edges from the high girth graph which again results in $O(\log(n))$ recourse by Lemma \ref{lem:sparse-case}.

    If we remove an edge from the high girth graph, then this has recourse $O(1)$ by Lemma \ref{lem:sparse-case}. On the other hand, removing an edge from a cycle causes $O(\log(n))$ additions, each of which may have recourse $O(\log(n))$. So there is $O(\log^2(n))$ recourse in this case. 
    
    Finally, to analyze the discrepancy, note that the edges in $\mathcal{C}$ contribute $0$ to the discrepancy of any vertex. On the other hand, by Lemma \ref{lem:sparse-case}, we have that the edges in $E_{girth}$ have discrepancy at most $3$. Thus, the total discrepancy is at most $3$, as claimed.
\end{proof}

\bibliographystyle{alpha}
\bibliography{references}

\newcommand{\etalchar}[1]{$^{#1}$}
\begin{thebibliography}{GGK{\etalchar{+}}22}

\bibitem[AAN{\etalchar{+}}98]{AjtaiANRSW98}
Mikl{\'{o}}s Ajtai, James Aspnes, Moni Naor, Yuval Rabani, Leonard~J. Schulman, and Orli Waarts.
\newblock Fairness in scheduling.
\newblock {\em J. Algorithms}, 29(2):306--357, 1998.

\bibitem[ALS21]{AlweissLS21}
Ryan Alweiss, Yang~P. Liu, and Mehtaab Sawhney.
\newblock Discrepancy minimization via a self-balancing walk.
\newblock In {\em {STOC}}, pages 14--20. {ACM}, 2021.

\bibitem[BB17]{BB17}
E.~Berglin and G.~S. Brodal.
\newblock A simple greedy algorithm for dynamic graph orientation.
\newblock In {\em 28th International Symposium on Algorithms and Computation, {ISAAC}}, pages 12:1--12:12, 2017.

\bibitem[BF99]{BF99}
G.~St{\o}lting Brodal and R.~Fagerberg.
\newblock Dynamic representation of sparse graphs.
\newblock In {\em Algorithms and Data Structures, 6th International Workshop, {WADS}}, pages 342--351, 1999.

\bibitem[BKK{\etalchar{+}}21]{BenderKKP021}
Michael~A. Bender, Tsvi Kopelowitz, William Kuszmaul, Ely Porat, and Clifford Stein.
\newblock Incremental edge orientation in forests.
\newblock In Petra Mutzel, Rasmus Pagh, and Grzegorz Herman, editors, {\em 29th Annual European Symposium on Algorithms, {ESA} 2021, September 6-8, 2021, Lisbon, Portugal (Virtual Conference)}, volume 204 of {\em LIPIcs}, pages 12:1--12:18. Schloss Dagstuhl - Leibniz-Zentrum f{\"{u}}r Informatik, 2021.

\bibitem[FW83]{FaginW83}
Ronald Fagin and John~H. Williams.
\newblock A fair carpool scheduling algorithm.
\newblock {\em {IBM} J. Res. Dev.}, 27(2):133--139, 1983.

\bibitem[GGK{\etalchar{+}}22]{gupta2022online}
Anupam Gupta, Vijaykrishna Gurunathan, Ravishankar Krishnaswamy, Amit Kumar, and Sahil Singla.
\newblock Online discrepancy with recourse for vectors and graphs.
\newblock In {\em Proceedings of the 2022 Annual ACM-SIAM Symposium on Discrete Algorithms (SODA)}, pages 1356--1383. SIAM, 2022.

\bibitem[GKKS20]{GuptaK0020}
Anupam Gupta, Ravishankar Krishnaswamy, Amit Kumar, and Sahil Singla.
\newblock Online carpooling using expander decompositions.
\newblock In {\em {FSTTCS}}, volume 182 of {\em LIPIcs}, pages 23:1--23:14. Schloss Dagstuhl - Leibniz-Zentrum f{\"{u}}r Informatik, 2020.

\bibitem[GKS14]{gupta2014maintaining}
Anupam Gupta, Amit Kumar, and Cliff Stein.
\newblock Maintaining assignments online: Matching, scheduling, and flows.
\newblock In {\em Proceedings of the twenty-fifth annual ACM-SIAM symposium on Discrete algorithms}, pages 468--479. SIAM, 2014.

\bibitem[HTZ14]{HTZ14}
M.~He, G.~Tang, and N.~Zeh.
\newblock Orienting dynamic graphs, with applications to maximal matchings and adjacency queries.
\newblock In {\em Algorithms and Computation - 25th International Symposium, {ISAAC}}, pages 128--140, 2014.

\bibitem[KKPS14]{KKPS14}
T.~Kopelowitz, R.~Krauthgamer, E.~Porat, and S.~Solomon.
\newblock Orienting fully dynamic graphs with worst-case time bounds.
\newblock In {\em Automata, Languages, and Programming - 41st International Colloquium, {ICALP(2)}}, pages 532--543, 2014.

\bibitem[KRR24]{kulkarni2024optimal}
Janardhan Kulkarni, Victor Reis, and Thomas Rothvoss.
\newblock Optimal online discrepancy minimization.
\newblock In {\em Proceedings of the 56th Annual ACM Symposium on Theory of Computing}, pages 1832--1840, 2024.

\bibitem[LSS21]{liu2021gaussian}
Yang~P Liu, Ashwin Sah, and Mehtaab Sawhney.
\newblock A gaussian fixed point random walk.
\newblock {\em arXiv preprint arXiv:2104.07009}, 2021.

\end{thebibliography}

\end{document}